\documentclass[conference, 10pt]{IEEEtran}
\IEEEoverridecommandlockouts
\usepackage{amsmath}
\usepackage{cite}
\usepackage{amsmath,amssymb,amsfonts}
\usepackage[ruled,linesnumbered,lined]{algorithm2e}
\usepackage{graphicx}
\usepackage{textcomp}
\usepackage{xcolor}
\usepackage{bm}
\usepackage{booktabs}
\usepackage{caption,subcaption}

\usepackage{amsthm,amssymb,amsfonts}

\newtheorem{theorem}{Theorem}

\SetKwInput{Input}{input}
\SetKwInput{Output}{output}

\usepackage{array}
\newcolumntype{L}[1]{>{\raggedright\let\newline\\\arraybackslash\hspace{0pt}}m{#1}}
\newcolumntype{C}[1]{>{\centering\let\newline\\\arraybackslash\hspace{0pt}}m{#1}}
\newcolumntype{R}[1]{>{\raggedleft\let\newline\\\arraybackslash\hspace{0pt}}m{#1}}

\def\BibTeX{{\rm B\kern-.05em{\sc i\kern-.025em b}\kern-.08em
		T\kern-.1667em\lower.7ex\hbox{E}\kern-.125emX}}

\begin{document}
	\title{Two-Layer Microwave Linear Analog Computer (MiLAC)-aided Multi-user MISO Networks}
	\author{
		\IEEEauthorblockN{Xiaohua Zhou$^{*}$,  Tianyu Fang$^{\dagger}$,
			Yijie Mao$^{*}$, Bruno Clerckx$^{\ddagger}$ }
		\IEEEauthorblockA{
			$^{*}$School of Information Science and Technology, ShanghaiTech University, Shanghai 201210, China \\
			$^{\dagger}$Centre for Wireless Communications, University of Oulu, Finland \\
		$^{\ddagger}$Department of Electrical and Electronic Engineering, Imperial College London, London SW7 2AZ, U.K.\\
			Email:\{zhouxh3, 
			maoyj\}@shanghaitech.edu.cn, tianyu.fang@oulu.fi, b.clerckx@imperial.ac.uk
		}
			\thanks{This work has been supported by the National Nature Science Foundation of China under Grant 62571331.  \textit{(Corresponding author: Yijie Mao)}}
		\\[-2.5 ex]
		\vspace{-1.1cm}
	}
	\maketitle
	\vspace{-1.6cm}
	\thispagestyle{empty}
	\pagestyle{empty}
	\begin{abstract}
		Microwave linear analog computer (MiLAC)-aided transmit beamforming, which processes transmitted symbols entirely in the analog domain, has recently emerged as a promising alternative to fully digital or hybrid beamforming architectures for single-user multi-antenna systems. However, recent studies have shown that deploying a single lossless and reciprocal MiLAC at the transmitter cannot achieve the same capacity as fully digital beamforming in multi-user scenarios. To address this limitation, we propose a novel two-layer MiLAC-aided beamforming architecture at the transmitter for a downlink multi-user multiple-input single-output (MISO) network. Leveraging microwave network theory, we first prove that lossless and reciprocal two-layer MiLAC-aided beamforming can achieve the same performance as digital beamforming, and we derive a closed-form mapping from digital beamforming to two-layer MiLAC analog beamforming. Furthermore, we formulate a sum-rate maximization problem and develop an efficient optimization framework to jointly optimize the power allocation and the scattering matrices for the proposed two-layer MiLAC architecture. Numerical results validate our theoretical findings and demonstrate that two-layer MiLAC achieves the same sum-rate performance as fully digital beamforming.
	\end{abstract}
	
	\begin{IEEEkeywords}
		Microwave linear analog computer (MiLAC), multi-user multiple-input single-output (MISO), sum-rate maximization.
	\end{IEEEkeywords}
	\vspace{-0.7cm}
	\section{Introduction}
	Beyond-diagonal reconfigurable intelligent surfaces (BD-RIS) leverage multiport network theory to model inter-element coupling and reconfigure the wireless environment \cite{BDRIStutorial}. Building on this principle, the microwave linear analog computer (MiLAC) employs a reconfigurable multiport microwave network at the transmitter to perform analog beamforming prior to radiation \cite{nerini2025milac1,nerini2025milac2}. In MiLAC, baseband signals are upconverted to radio-frequency (RF) and processed through the network, where analog transformations are realized via microwave interactions before direct radiation. This intrinsic analog computing capability eliminates the need for digital precoding, enabling a compact and energy-efficient solution for extremely large-scale multiple-input multiple-output (MIMO) systems, with reduced RF chain requirements, relaxed analog-to-digital converter (ADC)/digital-to-analog converter (DAC) resolution, and significantly lower computational complexity.
	
	The study of MiLAC is still in its early stages. A recent study \cite{nerini2025milac-capacity} has shown that fully-connected, lossless, and reciprocal one-layer MiLAC-aided beamforming achieves the same capacity as digital beamforming in single-user MIMO systems. However, such fully-connected architectures require interconnecting all ports, providing maximum flexibility at the cost of high circuit complexity and control overhead. To alleviate this issue, \cite{nerini2025mimo} proposed one-layer stem-connected MiLAC architectures, which significantly reduce hardware complexity while preserving capacity-achieving performance in single-user MIMO systems. However, when extended to multi-user scenarios, existing works \cite{multiuser-milac,limits-milac} have shown that lossless reciprocal one-layer MiLAC suffers from performance degradation compared to fully digital beamforming. To address this limitation, a hybrid digital–MiLAC beamforming architecture has been proposed, requiring only a small number of RF chains to interface a low-dimensional digital beamformer with a single-layer MiLAC-aided analog beamformer \cite{limits-milac}. While this approach reduces the number of required RF chains and preserves the performance of fully digital beamforming, the inclusion of a digital beamformer introduces the need for high-resolution RF chains. To the best of our knowledge, the design of a fully analog beamforming architecture capable of achieving digital beamforming performance in multi-user systems remains an open research problem.
	
	In this work, we address this challenge by proposing a novel two-layer MiLAC-aided beamforming architecture that achieves the same sum-rate performance as digital beamforming in a $K$-user multiple-input single-output (MISO) systems. This architecture employs two cascaded MiLACs at the transmitter and requires only $K$ RF chains for low-resolution DACs and $K$ amplifiers for power allocation (see Fig. \ref{Fig:sys-model}). Based on this framework, we analytically prove that the proposed architecture can achieve performance equivalent to fully digital beamforming in a general multi-user MISO system, and we derive a closed-form mapping from digital beamforming to two-layer MiLAC analog beamforming. Furthermore, we formulate a sum-rate maximization problem and develop an efficient optimization framework to jointly optimize the power allocation and the scattering matrix for the proposed two-layer MiLAC architecture. Numerical results validate the theoretical analysis and demonstrate the effectiveness of the proposed approach.

\par \textit{Notations:}  $ (\cdot)^\mathsf{T} $, $ (\cdot)^\mathsf{H} $, $ (\cdot)^* $, and $ \operatorname{tr}(\cdot) $ represent the transpose, Hermitian, complex conjugate, and trace, respectively.
	\vspace{-0.3cm}
	
	\section{System  Model}\label{Sec:system model}
		\vspace{-0.2cm}
	Consider a multi-user downlink communication network, where a base station (BS) equipped with $L$ transmit antennas simultaneously serves $K$ single-antenna users indexed by $\mathcal{K} = \{1, \cdots, K\}$. Unlike the MiLAC-assisted multi-user MISO network proposed in \cite{multiuser-milac,limits-milac}, where the transmitter employs a single-layer MiLAC to enable analog beamforming, this work introduces a novel two-layer MiLAC-aided transmitter. In the following, we elaborate on the transmission model and the MiLAC modeling for the proposed two-layer MiLAC-aided beamforming architecture.

	
	\subsection{Proposed Two-layer MiLAC-aided Transmission Model}
	As illustrated in Fig. \ref{Fig:sys-model}, the proposed transmitter architecture operates as follows. First, $K$ parallel symbol streams intended for the users are processed by $K$ RF chains, which convert the signals from the digital domain to the analog domain. These RF chains are connected to a $2K$-port MiLAC, referred to as MiLAC~1, which performs first-stage precoding based on the analog beamforming matrix $\mathbf{F} \in \mathbb{C}^{K \times K}$. Subsequently, $K$ amplifiers  are dedicated to power allocation. Finally, a $(K+L)$-port MiLAC, denoted as MiLAC~2, performs second-stage precoding before transmission based on the analog beamforming matrix $\mathbf{W} \in \mathbb{C}^{L \times K}$.

	\begin{figure}[!t]
		\centering
		\includegraphics[width=0.48\textwidth]{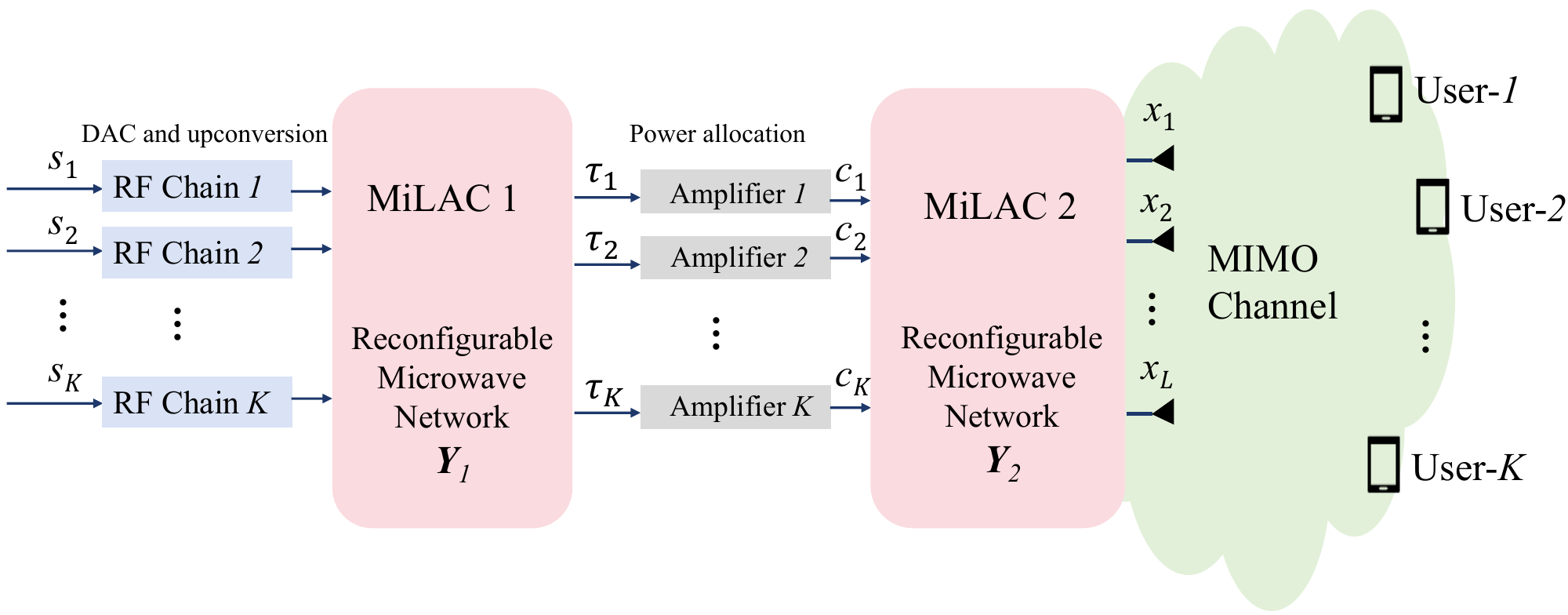}
		\caption{Block diagram of a multi-user MISO system with two-layer MiLAC architecture at the BS. }
		\label{Fig:sys-model}
		\vspace{-0.4cm}
	\end{figure} 
	
	
	Let $s_k \in \mathbb{C}$ denote the complex-valued data symbol intended for user $k$, satisfying $\mathbb{E}[|s_k|^2] = 1$. Define $\mathbf{s} = [s_1, \cdots, s_K]^T \in \mathbb{C}^{K \times 1}$ as the vector collecting the data symbols for all users. By adopting the proposed two-layer MiLAC-aided beamforming architecture at the transmitter, the resulting transmit signal at the output of MiLAC 1 is $ \bm \tau = \mathbf{F} \mathbf{s} $. Let $ \mathbf{P} \in \mathbb{C}^{K \times K}$ denote the power allocation matrix at the amplifier between MiLAC 1 and MiLAC 2. After power allocation by the amplifiers, the signal at the input of MiLAC 2 becomes $\mathbf{c}=\mathbf{P}^{\frac{1}{2}} \bm \tau$,
	where $ \mathbf{P}^{\frac{1}{2}} \in \mathbb{C}^{K \times K}$ is the square root of the power allocation matrix, i.e.,  $ \mathbf{P}^{\frac{1}{2}} = \operatorname{diag} (\sqrt{p_1}, \cdots, \sqrt{p_K}) $ with $ p_k $ being the power allocation for the $ k $-th symbol and $ \sum_{k=1}^K p_k \leq P_t $. Here,  $P_t$ is the total transmit power. After the whole two-layer MiLAC-aided beamforming architecture, transmit signal at the BS is 
	\begin{equation}
	\mathbf{x} = \mathbf{W}\mathbf{P}^{\frac{1}{2}} \mathbf{F} \mathbf{s}.
	\end{equation}
	 Note that the analog beamforming matrix employed by MiLAC must satisfy the underlying physical constraints of the network. Detailed formulations and constraints of the analog beamforming matrices for both MiLAC 1 and MiLAC 2 will be presented in the next subsection.

	Let $\mathbf{h}_k \in \mathbb{C}^{L \times 1}$ denote the channel vector between the BS and user $k$, for all $k \in \mathcal{K}$. The channel state information is assumed to be perfectly known at the BS.  The signal received by user-$ k $ is given as
	\begin{equation}
		\begin{split}
			y_k
			=& \mathbf{h}_k^\mathsf{H} \mathbf{x} + n_k, \\
			=& \underbrace{\mathbf{h}_k^\mathsf{H} \mathbf{W}\mathbf{P}^{\frac{1}{2}}  \mathbf{f}_k s_k}_{\text{desired signals}} + \underbrace{\sum_{i \neq k}^K\mathbf{h}_k^\mathsf{H} \mathbf{W}\mathbf{P}^{\frac{1}{2}}  \mathbf{f}_i s_i}_{\text{effective interference}} + \underbrace{n_k}_{\text{effective noise}},
		\end{split}
	\end{equation}
	where $n_k \sim \mathcal{CN}(0, \sigma_k^2)$ denotes the additive white Gaussian noise, and $ \mathbf{f}_k \in \mathbb{C}^{K \times 1} $ is the $ k $-th column of $ \mathbf{F}, \forall k \in \{1, \cdots, K\} $, i.e., $ \mathbf{F}=[\mathbf{f}_1,\cdots,\mathbf{f}_K] $. 
	
	The signal-to-interferenceplus-noise ratio (SINR) for user-$ k $ to decode its desired signal is 
	\begin{equation}\label{eq:SINR}
		\gamma_k = \frac{|\mathbf{h}_k^\mathsf{H}\mathbf{W}\mathbf{P}^{\frac{1}{2}} \mathbf{f}_k|^2}{ \sum_{i\neq k}|\mathbf{h}_k^\mathsf{H}\mathbf{W}\mathbf{P}^{\frac{1}{2}} \mathbf{f}_i|^2 + \sigma_k^2},
	\end{equation}
	where signals from other users are treated as interference. 
	The corresponding achievable rate at user-$ k $ to decode its intended symbol is $R_k = \log ( 1 + 	\gamma_k )$.
	
	\subsection{MiLAC Modeling}
	In this work, we adopt fully-connected MiLAC  architectures \cite{nerini2025milac-capacity} for both MiLAC 1 and MiLAC 2 to explore the optimal performance of the proposed two-layer MiLAC architecture. In the following, we first characterize the model for MiLAC 1, followed by that for MiLAC 2.

	According to microwave theory, the multi-port MiLAC is implemented as a microwave network characterized by a tunable admittance matrix, which depends on the adjustable components of the network. 
	For MiLAC 1, its admittance matrix is denoted by $\mathbf{Y}_1 \in \mathbb{C}^{2K \times 2K}$. Since it is fully-connected,  each port-$v$ of MiLAC 1 is connected to ground through an admittance $Y_{v,v} \in \mathbb{C}$ for all $v \in \mathcal{V} = \{1, \ldots, 2K\}$, and is interconnected to port-$i$ through an admittance $Y_{i,v} \in \mathbb{C}$ for all $i \neq v$. As a function of the tunable admittance components, the $(i,v)$-th entry of the MiLAC admittance matrix $\mathbf{Y}_1$, with $i,v \in \{1, \ldots, 2K\}$, is defined as
	\begin{equation}
		[\mathbf{Y}_1]_{i,v} = 
		\begin{cases}
			- Y_{i,v}, & i \neq v\\[1mm]
			\sum_{j=1}^{2K} Y_{j,v}, & i = v
		\end{cases},
	\end{equation}
	which is obtained by Ohm's law \cite{nerini2025milac1}.

	In the proposed two-layer MiLAC-aided beamforming architecture, the analog beamforming matrix $\mathbf{F}$ associated with MiLAC 1 is expressed as a function of $ \mathbf{Y} \in \mathbb{C}^{2K \times 2K}$ \cite{nerini2025milac2,nerini2025milac-capacity,pozar2021microwave}, i.e.,
	\begin{equation}\label{eq:1-beamforming}
		\mathbf{F} =  \left[\left(\frac{\mathbf{Y}_1}{Y_0}+\mathbf{I}_{2K}\right)^{-1}\right]_{K+1:2K,1:K},
	\end{equation}
	where $ Y_0 $ denotes the reference admittance.
	

	In this work, we assume the MiLACs are lossless and reciprocal. Specifically, the admittance components of a lossless and reciprocal MiLAC are purely imaginary and can be expressed as $Y_{i,v} = j B_{i,v}$, where $B_{i,v} \in \mathbb{R}$ denotes the susceptance of $Y_{i,v}$ for all $i,v \in \{1, \ldots, 2K\}$. Accordingly, a lossless and reciprocal MiLAC at the transmitter must satisfy the following constraints: $\mathbf{Y}_1 = j \mathbf{B}_1, \quad \mathbf{B}_1 = \mathbf{B}_1^\mathsf{T},$
	where $\mathbf{B}_1 \in \mathbb{R}^{N \times N}$ is the susceptance matrix of MiLAC 1.

	This $2K$-port microwave network with admittance matrix $\mathbf{Y}_1 \in \mathbb{C}^{2K \times 2K}$ can be equivalently described by its scattering matrix $\bm \Theta \in \mathbb{C}^{2K \times 2K}$ \cite{nerini2025milac-capacity}, given by $\bm \Theta = (\mathbf{I}_{N} + j Z_0 \mathbf{B}_1)^{-1} (\mathbf{I}_{N} - j Z_0 \mathbf{B}_1)$,
	where $Z_0=Y_0^{-1}$ is the reference impedance used to compute the scattering parameters, typically set to $50\,\Omega$. According to the relationship between scattering matrix and MiLAC-aided beamforming matrix \cite{nerini2025milac-capacity,nerini2025mimo,limits-milac}, \eqref{eq:1-beamforming}  can be reformulated as
	\begin{equation}\label{eq:theta1}
		\mathbf{F} =  \!\left[\left(\frac{\mathbf{Y}_1}{Y_0}+\mathbf{I}_{N}\right)^{-1} \right]_{K+1:2K,1:K}	\!=\frac{1}{2}\left[\bm \Theta\right]_{K+1:2K,1:K}.
	\end{equation}
	
	
	The only difference between MiLAC~1 and MiLAC~2 lies in their dimensionality. Specifically, MiLAC~1 is modeled as a $2K$-port network, whereas MiLAC~2 is a $(L+K)$-port network. With $N = L + K$, define $\mathbf{Y}_2 \in \mathbb{C}^{N \times N}$ as admittance matrix, and let $\bm{\Phi} \in \mathbb{C}^{N \times N}$ denote its scattering matrix of MiLAC~2. Following the modeling of MiLAC 1, we have
	\begin{equation}\label{eq:theta2}
		\begin{split}
			\mathbf{W} = \left[ \left(\frac{\mathbf{Y}_2}{Y_0}+\mathbf{I}_{N}\right)^{-1} \right]_{K+1:N,1:K}
			=\frac{1}{2}\left[\bm \Phi \right]_{K+1:N,1:K}.
		\end{split}
	\end{equation}
	Besides the constraints in \eqref{eq:theta1} and \eqref{eq:theta2}, the lossless and reciprocal MiLAC must adhere to the unitary and symmetry properties, which are given as
	\begin{subequations}
		\begin{align}\label{eq:con-theta1}
			\bm \Theta^\mathsf{H}\bm \Theta =\mathbf{I}_{2K}, \bm \Theta =\bm \Theta^\mathsf{T}, 
		\end{align}
		\begin{align}\label{eq:con-theta2}
			\bm \Phi^\mathsf{H}\bm \Phi=\mathbf{I}_{N}, \bm \Phi=\bm \Phi^\mathsf{T}.
	\end{align}\end{subequations}
	\vspace{-0.2cm}
	\section{Optimality of the Proposed Two-layer MiLAC architecture}
	In this section, we show that, in contrast to single-layer MiLAC-aided beamforming which cannot achieve the full flexibility of digital beamforming, the proposed two-layer MiLAC architecture can attain the same sum-rate performance as fully digital beamforming. The main result of this work is summarized in Theorem \ref{the:MiLAC-HBF} below.

	\begin{theorem}\label{the:MiLAC-HBF}
		Consider a multi-user MISO network with a fully digital beamformer $\mathbf{P}_d \in \mathbb{C}^{L \times K}$ that performs joint magnitude and phase precoding at baseband. Let the singular value decomposition (SVD) of $\mathbf{P}_d$ be:
		\begin{align}\label{eq:Pd-SVD}
			\mathbf{P}_d = \mathbf{U}\mathbf{S}\mathbf{V}^\mathsf{H},
		\end{align}
		where $\mathbf{U} = [\mathbf{U}_1, \mathbf{U}_2] \in \mathbb{C}^{L \times L}$ is a unitary matrix with $\mathbf{U}_1 \in \mathbb{C}^{L \times K}$ and $\mathbf{U}_2 \in \mathbb{C}^{L \times (L-K)}$, 
		$\mathbf{S} \in \mathbb{R}^{L \times K}$ is a diagonal matrix containing the singular values, and $\mathbf{V} \in \mathbb{C}^{K \times K}$ is a unitary matrix. For a fully digital precoding based transmitter with any given precoding matrix $\mathbf{P}_d$, there always exists an equivalent two-layer MiLAC-aided transmitter with optimal scattering matrices $\bm \Theta^\star$ for MiLAC 1, $\bm \Phi^\star$ for MiLAC 2, optimal power allocation matrix $(\mathbf{P}^\star)^{\frac{1}{2}}$, optimal beamforming matrix $\mathbf{F}^\star$ for MiLAC 1 and $\mathbf{W}^\star$ for MiLAC 2 given by 
		\begin{align}\label{eq:theta-star}\small
			\bm\Theta^\star	=
			\begin{bmatrix}
				\mathbf{0}_{K\times K} & \mathbf{V}^\mathsf{*}\\
				\mathbf{V}^\mathsf{H} & \mathbf{0}_{K\times K}
			\end{bmatrix},
		\end{align}
	\vspace{0.1mm}
		\begin{align}\label{eq:phi-star}\small
			\setlength{\jot}{2pt}
			\bm \Phi^\star=
			\begin{bmatrix}
				\mathbf{0}_{K\times K} & \mathbf{U}_1^\mathsf{T}\\
				\mathbf{U}_1 & -\mathbf{U}_2\mathbf{U}_2^\mathsf{T}
			\end{bmatrix},
		\end{align}
		\vspace{-10pt}
		\begin{align}\label{eq:p-star}\small
			(\mathbf{P}^\star)^{\frac{1}{2}} =4\mathbf{S}, 
		\end{align} 
		\vspace{-15pt}
		\begin{align}\small
			\mathbf{F}^\star =\frac{1}{2}\left[\bm \Theta^\star\right]_{K+1:2K,1:K}=\mathbf{V}^\mathsf{H},
		\end{align}
		and
			\vspace{-10pt}
		\begin{equation}\small
			\mathbf{W}^\star =\frac{1}{2}\left[\bm \Phi^\star\right]_{K+1:N,1:K}=\mathbf{U}_1.
		\end{equation}
	The resulting effective MiLAC-based beamforming matrix $ \mathbf{G} \in \mathbb{C}^{L \times K}=\mathbf{W}^\star (\mathbf{P}^\star)^{\frac{1}{2}} \mathbf{F}^\star$
		is equivalent to $\mathbf{P}_d$, showing that the two-layer MiLAC architecture can exactly realize any fully digital beamformer.
	\end{theorem}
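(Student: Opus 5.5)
The plan is a direct verification in three steps: check that the proposed $\bm\Theta^\star$ and $\bm\Phi^\star$ are admissible lossless and reciprocal scattering matrices, i.e., that they satisfy \eqref{eq:con-theta1}--\eqref{eq:con-theta2}; read off the induced $\mathbf{F}^\star$ and $\mathbf{W}^\star$ from \eqref{eq:theta1}--\eqref{eq:theta2}; and multiply out $\mathbf{G}$. Throughout I assume $L\ge K$, which the partition $\mathbf{U}=[\mathbf{U}_1,\mathbf{U}_2]$ already presupposes. I write $\mathbf{S}_1\in\mathbb{R}^{K\times K}$ for the top diagonal block of $\mathbf{S}$, so that $\mathbf{P}_d=\mathbf{U}_1\mathbf{S}_1\mathbf{V}^\mathsf{H}$ is the thin SVD. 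The power matrix in \eqref{eq:p-star} must be read as $4\mathbf{S}_1$, since it has to be a $K\times K$ nonnegative diagonal matrix.

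\emph{Admissibility of $\bm\Theta^\star$.} Symmetry holds because $(\mathbf{V}^*)^\mathsf{T}=\mathbf{V}^\mathsf{H}$, so the two off-diagonal blocks are transposes of each other. Unitarity reduces to blockwise products:
\begin{equation*}
(\bm\Theta^\star)^\mathsf{H}\bm\Theta^\star=\operatorname{diag}\!\left(\mathbf{V}\mathbf{V}^\mathsf{H},\,\mathbf{V}^\mathsf{T}\mathbf{V}^*\right),
\end{equation*}
and both diagonal blocks equal $\mathbf{I}_K$ because $\mathbf{V}$ is unitary.

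\emph{Admissibility of $\bm\Phi^\star$.} Symmetry holds because $\mathbf{U}_2\mathbf{U}_2^\mathsf{T}$ is symmetric and the off-diagonal blocks $\mathbf{U}_1^\mathsf{T}$ and $\mathbf{U}_1$ are transposes. For unitarity I compute the four blocks of $(\bm\Phi^\star)^\mathsf{H}\bm\Phi^\star$, using the facts that follow from unitarity of $\mathbf{U}$:
\begin{itemize}
\item The $(1,1)$ block is $\mathbf{U}_1^\mathsf{H}\mathbf{U}_1=\mathbf{I}_K$.
\item The off-diagonal blocks reduce to $\mathbf{U}_1^\mathsf{H}\mathbf{U}_2\mathbf{U}_2^\mathsf{T}$ and its conjugate transpose. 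These vanish because $\mathbf{U}_1^\mathsf{H}\mathbf{U}_2=\mathbf{0}$.
\item The $(2,2)$ block is $\mathbf{U}_1^*\mathbf{U}_1^\mathsf{T}+\mathbf{U}_2^*(\mathbf{U}_2^\mathsf{H}\mathbf{U}_2)\mathbf{U}_2^\mathsf{T}=\mathbf{U}_1^*\mathbf{U}_1^\mathsf{T}+\mathbf{U}_2^*\mathbf{U}_2^\mathsf{T}=(\mathbf{U}\mathbf{U}^\mathsf{H})^*=\mathbf{I}_L$.
\end{itemize}
This is the step I expect to need the most care. The $-\mathbf{U}_2\mathbf{U}_2^\mathsf{T}$ block is exactly what completes a unitary, symmetric dilation of the isometry $\mathbf{U}_1$, and the sign and transpose conventions must be tracked precisely for the cross terms to cancel.

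\emph{Equivalence with $\mathbf{P}_d$.} By \eqref{eq:theta1} and \eqref{eq:theta2}, the relevant blocks give $\mathbf{F}^\star=\tfrac12\mathbf{V}^\mathsf{H}$ and $\mathbf{W}^\star=\tfrac12\mathbf{U}_1$. The statement writes these without the factor $\tfrac12$, but this factor is precisely what the gain $4$ in \eqref{eq:p-star} compensates. Hence
\begin{equation*}
\mathbf{G}=\tfrac12\mathbf{U}_1\,(4\mathbf{S}_1)\,\tfrac12\mathbf{V}^\mathsf{H}=\mathbf{U}_1\mathbf{S}_1\mathbf{V}^\mathsf{H}=\mathbf{P}_d.
\end{equation*}
Since the SINRs in \eqref{eq:SINR} depend on the precoder only through $\mathbf{G}$, every rate achievable with $\mathbf{P}_d$ is achieved by the two-layer MiLAC.

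\emph{Remaining subtleties.} First, the amplifier power $\sum_k p_k=16\operatorname{tr}(\mathbf{S}_1^2)$ differs from the radiated power $\operatorname{tr}(\mathbf{P}_d\mathbf{P}_d^\mathsf{H})$ by the constant factor 16. So the comparison with digital beamforming must be made at matched radiated power, equivalently by rescaling $P_t$. Second, a physical susceptance matrix $\mathbf{B}$ recovered from a scattering matrix requires $\mathbf{I}+\bm\Theta^\star$ and $\mathbf{I}+\bm\Phi^\star$ to be invertible. If they are not, I would argue that the scattering matrix is the primary design variable, and that it is realizable as the limit of, or via, a lossless reciprocal network, as in \cite{nerini2025milac-capacity}.
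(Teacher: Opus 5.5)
Your proposal is correct and follows essentially the paper's route: you use the same blocks $\bm\Phi_{21}=\mathbf{U}_1$ and $\bm\Theta_{21}=\mathbf{V}^\mathsf{H}$ and the same completions, check them for unitarity and symmetry, and form the same product; the only difference is that you verify the given $\bm\Theta^\star,\bm\Phi^\star$ directly, whereas the paper derives them through a zero-objective least-squares and feasibility reformulation (which also shows that $\bm\Phi_{11}=\mathbf{0}$ is forced). Your explicit $(2,2)$ block $\mathbf{U}_1^*\mathbf{U}_1^\mathsf{T}+\mathbf{U}_2^*\mathbf{U}_2^\mathsf{T}=\mathbf{I}_L$ is the correct condition (the paper writes $\mathbf{U}_1\mathbf{U}_1^\mathsf{H}$, dropping a conjugate), and your other observations correctly flag points the paper's statement and proof get wrong or gloss over: that $\mathbf{F}^\star=\tfrac12\mathbf{V}^\mathsf{H}$ and $\mathbf{W}^\star=\tfrac12\mathbf{U}_1$, that $\operatorname{tr}(\mathbf{P}^\star)=16\|\mathbf{P}_d\|_\mathsf{F}^2$ exceeds $P_t$ whenever $\mathbf{P}_d$ meets the digital budget unless the budget is rescaled, and that $\mathbf{I}+\bm\Theta^\star$ can be singular (e.g.\ for real $\mathbf{V}$).
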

		\addtolength{\topmargin}{0.05in}
			\vspace{-25pt}
	\textit{Proof:}  
	For notational convenience, we respectively partition $\bm\Theta \in \mathbb{C}^{2K\times 2K}$ and $\bm \Phi \in \mathbb{C}^{N\times N}$ as 
	$\bm\Theta=
	\begin{bmatrix}
		\bm\Theta_{11} & \bm\Theta_{21}^\mathsf{T}\\
		\bm\Theta_{21} & \bm\Theta_{22}
	\end{bmatrix}$ and $\bm \Phi=
	\begin{bmatrix}
		\bm \Phi_{11}  & \bm \Phi_{21}^\mathsf{T}\\
		\bm \Phi_{21} & \bm \Phi_{22}
	\end{bmatrix}$,
	where $\bm\Theta_{11}\in\mathbb{C}^{K\times K}$, $\bm\Theta_{21}\in\mathbb{C}^{K\times K}$, $\bm\Theta_{22}\in\mathbb{C}^{K\times K}$, $\bm \Phi_{11}\in\mathbb{C}^{K\times K}$, $\bm \Phi_{21}\in\mathbb{C}^{L\times K}$, and $\bm \Phi_{22}\in\mathbb{C}^{L\times L}$. For the two-layer MiLAC architecture proposed in Section \ref{Sec:system model}, we obtain its effective beamformer $ \mathbf{G} \in \mathbb{C}^{L \times K}$ at the transmitter as 
	\begin{equation}\label{eq:eff-G}\small
		\mathbf{G}=\mathbf{W}\mathbf{P}^{\frac{1}{2}}\mathbf{F} = \frac{1}{4}\bm \Phi_{21}\mathbf{P}^{\frac{1}{2}}\bm\Theta_{21}.
	\end{equation}
	The second equality in \eqref{eq:eff-G} is obtained by substituting \eqref{eq:theta1} and \eqref{eq:theta2} into \eqref{eq:eff-G}. To prove Theorem~\ref{the:MiLAC-HBF}, it is sufficient to show that there exists a matrix $\mathbf{G}$ satisfying the constraints of the MiLAC architecture and is equivalent to the fully digital beamformer 
	$\mathbf{P}_d$. This can be established by demonstrating that the following optimization problem always admits a global solution with an objective value of zero:
	\begin{subequations}\label{eq:ls}\small
		\begin{align}
			\min_{\bm\Theta,\bm \Phi,\mathbf{P}^{\frac{1}{2}}} \quad& 
			\left\|\mathbf{P}_d-\frac{1}{4}\bm \Phi_{21}\mathbf{P}^{\frac{1}{2}}\bm\Theta_{21}\right\|_\mathsf{F}^2 \label{eq:ls-obj}\\
			\text{s.t.} \quad 
			&\eqref{eq:con-theta1},\eqref{eq:con-theta2},\label{eq:theta-all}\\
			&\operatorname{tr}(\mathbf{P})\le P_t.\label{eq:power}
		\end{align}
	\end{subequations}
	To solve problem \eqref{eq:ls}, we first consider the SVD of $\mathbf{P}_d$ in \eqref{eq:Pd-SVD}. Since columns of $\mathbf{P}_d$ lie in $\operatorname{span}(\mathbf{U}_1)$, it is natural to choose $\bm \Phi_{21}^\star = \mathbf{U}_1$. For a fixed $\bm \Phi_{21}^\star$,  the objective in \eqref{eq:ls-obj} is minimized by choosing $\bm\Theta_{21}^\star = \mathbf{V}^\mathsf{H}$. 
	However, $\bm \Phi_{21}^\star$ and $\bm\Theta_{21}^\star$ are only sub-blocks of $\bm \Phi^\star$ and $\bm\Theta^\star$, respectively. We must therefore construct full matrices $\bm \Phi^\star$ and $\bm\Theta^\star$ that satisfy the constraints \eqref{eq:con-theta1} and \eqref{eq:con-theta2} in \eqref{eq:ls}. To this end, we begin by verifying the feasibility of $\bm \Phi$, followed by that of $\bm \Theta$. To ensure that the structural constraints in \eqref{eq:theta2} and \eqref{eq:con-theta2} are satisfied, the feasibility of $\bm \Phi$ can be assessed by solving the following problem:
	\begin{subequations}\label{eq:feasible-check}
		\begin{align}
			\text{find}\quad &\bm \Phi_{11},\bm \Phi_{22}\\
			\text{s.t.}\quad &
			\begin{bmatrix}
				\bm \Phi_{11} & \mathbf{U}_1^\mathsf{T}\\
				\mathbf{U}_1 & \bm \Phi_{22}
			\end{bmatrix}^\mathsf{H}
			\begin{bmatrix}
				\bm \Phi_{11} & \mathbf{U}_1^\mathsf{T}\\
				\mathbf{U}_1 & \bm \Phi_{22}
			\end{bmatrix}
			=\mathbf{I}_N, \label{eq:unit-norm}\\
			&
			\bm \Phi_{11}=\bm \Phi_{11}^\mathsf{T},\\
			&
			\bm \Phi_{22}=\bm \Phi_{22}^\mathsf{T}.
		\end{align}
	\end{subequations}

	\vspace{3.5mm}
	Since $\mathbf{U}_1^\mathsf{H}\mathbf{U}_1=\mathbf{I}_K$, the first $K$ columns of $\bm \Phi$ must have unit $\ell_2$ norm. This forces $\bm \Phi_{11}^\star = \mathbf{0}_{K\times K}$.
	The remaining feasibility conditions in \eqref{eq:feasible-check} reduce to
	\begin{subequations}\label{eq:phi22}
		\begin{align}
			\mathbf{U}_1^\mathsf{H} \bm \Phi_{22} &= \mathbf{0}_{K\times L},\\
			\mathbf{U}_1\mathbf{U}_1^\mathsf{H} + \bm \Phi_{22}^\mathsf{H}\bm \Phi_{22} &= \mathbf{I}_L,\\
			\bm \Phi_{22} &= \bm \Phi_{22}^\mathsf{T}.
		\end{align}
	\end{subequations}
	A valid solution for $\bm \Phi_{22}$ is $\bm \Phi_{22}^\star = -\mathbf{U}_2\mathbf{U}_2^\mathsf{T}$,
	which satisfies all three constraints in \eqref{eq:phi22}. Notably, the selection of $\bm \Phi_{22}$ has no impact on system performance. Accordingly, the optimal solution for MiLAC 2 is given by \eqref{eq:phi-star}.
	Following the same procedure as that used to obtain $\bm \Phi^\star$ in \eqref{eq:feasible-check}-\eqref{eq:phi22}, we obtain the optimal solution for the scattering matrix $\bm \Theta$ of MiLAC 1 as \eqref{eq:theta-star}.
	\par
	Now, it remains to optimize $\mathbf{P}^{\frac{1}{2}}$ in \eqref{eq:ls}.
	With $\bm \Phi_{21}^\star=\mathbf{U}_1$ and $\bm\Theta_{21}^\star=\mathbf{V}^\mathsf{H}$, problem \eqref{eq:ls} reduces to
	\begin{subequations}\label{eq:ls-P}
		\begin{align}
			\min_{\mathbf{P}^{1/2}} \quad &
			\big\|\mathbf{S} - \tfrac{1}{4}\mathbf{P}^{\frac{1}{2}}\big\|_\mathsf{F}^2\\
			\text{s.t.}\quad&
			\operatorname{tr}(\mathbf{P})\le P_t.
		\end{align}
	\end{subequations}
	Notice that \eqref{eq:ls-P} is a constrained least-squares problem. The optimal solution can be obtained by projecting $4\mathbf{S}$ onto the Frobenius norm ball, which yields a closed-form expression for $(\mathbf{P}^\star)^{\frac{1}{2}}$. Specifically, the global minimizer of \eqref{eq:ls-P} is given by \eqref{eq:p-star}.

In summary, under the solution $(\bm\Theta^\star,\bm \Phi^\star,(\mathbf{P}^\star)^{\frac{1}{2}})$, the objective value of \eqref{eq:ls} is zero, and this solution satisfies
	\begin{equation}
		\frac{1}{4}\bm \Phi^\star_{21}(\mathbf{P}^\star)^{\frac{1}{2}}\bm\Theta^\star_{21}
		= \mathbf{U}_1\mathbf{S}\mathbf{V}^\mathsf{H}
		= \mathbf{P}_d,
	\end{equation}
	which proves that two-layer MiLAC-aided beamforming can always reproduce the fully digital beamformer when both MiLAC 1 and MiLAC 2 are lossless and reciprocal. Therefore, we complete the proof.
	\hfill$\square$

	\section{Problem Formulation and Proposed Algorithm}
	\subsection{Problem Formulation}
	Based on the proposed two-layer MiLAC transmission network, we next jointly optimize the precoders of the two MiLACs $\mathbf{F}, \mathbf{W}$ and the amplifier power allocation $\mathbf{P}$ to maximize system sum-rate. This is equivalent to optimizing the scattering matrices $\bm \Theta$ and $ \bm \Phi$ of MiLAC 1 and MiLAC 2, along with $\mathbf{P}$. The resulting sum-rate maximization problem is formulated as follows:
	\begin{subequations}\label{eq:pro-formula}
		\begin{align}
			\max_{\mathbf{P}, \bm \Theta, \bm \Phi} \quad &\sum_{k=1}^K R_k\\
			\operatorname{s.t.}  \quad 	& \eqref{eq:theta1}, \eqref{eq:theta2}, \eqref{eq:con-theta1}, \eqref{eq:con-theta2}, \eqref{eq:power},
		\end{align}
	\end{subequations}
	where constraints \eqref{eq:theta1}, \eqref{eq:theta2}, \eqref{eq:con-theta1}, \eqref{eq:con-theta2} ensure that the analog beamformers of both MiLACs are realized by lossless and reciprocal microwave networks, and \eqref{eq:power} is the power allocation constraint. 
	Problem \eqref{eq:pro-formula} is highly non-convex and challenging to solve due to the coupling between $ \mathbf{F} $ and $ \mathbf{W} $ in fractional rate expressions, along with their unitary and symmetric constraints. The algorithm we proposed to solve \eqref{eq:pro-formula} is specified in the next subsection.
	\subsection{Proposed Algorithm}
	Inspired by Theorem \ref{the:MiLAC-HBF}, we propose to solve \eqref{eq:pro-formula} in the following two steps:
	\begin{itemize}
		\item \textit{Step 1:} Solve the sum-rate problem for a conventional fully digital beamforming system and find optimal digital beamformer $\mathbf{P}_d^\star$.
		\item \textit{Step 2:} With $\mathbf{P}_d^\star$ obtained in Step 1, apply Theorem \ref{the:MiLAC-HBF} to derive closed-form solutions for the optimal MiLAC scattering matrices $\bm \Theta^\star$, $\bm \Phi^\star$, and power allocation $\mathbf{P}^\star$.
	\end{itemize}
	
	Since the closed-form mapping required in Step 2 is provided by Theorem \ref{the:MiLAC-HBF}, the main challenge reduces to finding the optimal fully digital beamforming in Step 1. For a conventional multi-user MISO network with $L$ RF chains, the sum-rate maximization problem that optimizes the fully digital beamforming matrix $\mathbf{P}_d \in \mathbb{C}^{L \times K}$ is given by:
	\begin{subequations}\label{eq:pro-formula-di}
		\begin{align}
			\max_{\mathbf{P}_d} \quad &\sum_{k=1}^K R^d_k\\
			\operatorname{s.t.}  \quad 	&   \operatorname{tr}(\mathbf{P}_d\mathbf{P}_d^\mathsf{H}) = P_t, \label{eq:power-digital}
		\end{align}
	\end{subequations}
	where $ \mathbf{P}_d=[\mathbf{p}_1^d, \cdots, \mathbf{p}_K^d] $ and
	\begin{equation}\label{eq:rate-di}
		R_k^d = \log \left( 1 + 	\frac{|\mathbf{h}_k^H\mathbf{p}_k^d |^2}{ \sum_{i\neq k}|\mathbf{h}_k^H \mathbf{p}^d_i|^2 + \sigma_k^2} \right).
	\end{equation}
	Problem \eqref{eq:pro-formula-di} has been widely investigated in the literature \cite{WMMSE2011,rethingkingWMMSE}. Typical methods to address \eqref{eq:pro-formula-di} includes conventional weighted minimum mean-square error (WMMSE) algorithm \cite{WMMSE2011} or its reduced-complexity variant, reduced WMMSE (RWMMSE) \cite{rethingkingWMMSE}. However, both approaches require matrix inversion, leading to prohibitive computational complexity in massive MIMO systems, with complexity scaling cubically or linearly in the number of transmit antennas. To overcome these limitations, we next propose an efficient algorithm that significantly reduces computational complexity while preserving performance.

		The channel matrix from the BS to all users is denoted as $ \mathbf{H} = [\mathbf{h}_1, \cdots, \mathbf{h}_K] \in \mathbb{C}^{L \times K} $. It has been proved in \cite{universal_low2025,fang2023multi-group} that any optimal solution $\mathbf{P}_d^\star$ of problem \eqref{eq:pro-formula-di} lies within the range space of the full channel matrix $\mathbf{H}$, i.e.,
	\begin{equation}\label{eq:OBS-weight}
		\mathbf{P}_d^\star = \mathbf{H} \mathbf{M},
	\end{equation}
	where $\mathbf{M} \in \mathbb{C}^{K \times K}$ is an coefficient matrix. This indicates that each optimal beamforming vector $(\mathbf{p}_{k}^d)^\star$ lies in the column space of the  channel matrix $\mathbf{H}$. 
	Based on this structure, we then propose a novel method that 
	reduces the dimensionality of the optimization variables.
	To further exploit the structure in \eqref{eq:OBS-weight}, consider the SVD of $\mathbf{H}$, given by $\mathbf{H} = \mathbf{Q} \bm{\Sigma} \mathbf{R}^\mathsf{H}$, where $\mathbf{Q} \in \mathbb{C}^{L \times K}$, $\bm{\Sigma} \in \mathbb{C}^{K \times K}$, and $\mathbf{R} \in \mathbb{C}^{K \times K}$. Based on this decomposition, we define the matrix $\mathbf{T} = \bm{\Sigma} \mathbf{R}^\mathsf{H} \mathbf{M}\in \mathbb{C}^{K \times K}$. Hence, $\mathbf{P}_d$ is equivalent to $\mathbf{Q}\mathbf{T}$,  and optimizing $\mathbf{P}_d$ in \eqref{eq:pro-formula-di} reduces to optimizing $\mathbf{T}$, whose dimension is independent of the number of transmit antennas $L$. This simplifies the optimization problem to: 
	\begin{subequations}\label{eq:pro-reformula-di}\small
		\begin{align}
			\max_{\mathbf{T}} \quad &\sum_{k=1}^K R_k\\
			\operatorname{s.t.}  \quad 	
			& \operatorname{tr}(\mathbf{T} \mathbf{T}^\mathsf{H}) = P_t,\label{eq:power-constraint-re}
		\end{align}
	\end{subequations} 	
	where $ R_k = \log (1+\frac{|\bar{\mathbf{h}}_k^\mathsf{H}\mathbf{t}_k|^2}{ \sum_{i\neq k}|\bar{\mathbf{h}}_k^\mathsf{H}\mathbf{t}_i|^2 + \sigma_k^2}) $,
	$ \bar{\mathbf{h}}_k= \mathbf{Q}^\mathsf{H} \mathbf{h}_k \in \mathbb{C}^{K\times 1}$ and $ \mathbf{t}_k \in \mathbb{C}^{K\times 1}$ is the $ k $-th column of $ \mathbf{T} $, i.e., $ \mathbf{T} =[\mathbf{t}_1, \cdots, \mathbf{t}_K] $. 
	
	Next, inspired by \cite{202501joint}, we address \eqref{eq:pro-reformula-di} using the projected successive linear approximation (PSLA)-based framework. First, the fractional programming (FP) technique \cite{shen2018fractional} is employed to transform \eqref{eq:pro-reformula-di} into a block-convex problem via auxiliary variables. Then, an alternating optimization (AO) algorithm is developed, where each auxiliary block admits a closed-form solution and the $\mathbf{T}$-block is updated via a projection-based closed-form expression.
	
	Specifically, introducing auxiliary variables $ \alpha_k $ and  $ \beta_k $, the expression $R_k$ in \eqref{eq:pro-reformula-di} is reformulated as
	\begin{equation}\label{eq:rate-beta}
		\begin{split}
			\bar{R}_k =& 2\sqrt{1+\alpha_k}\Re\left\{\beta_k^*\bar{\mathbf{h}}_k^\mathsf{H} \mathbf{t}_k\right\}+ \log(1+\alpha_k)\\
			&-|\beta_k|^2\left( \sum_{i=1}^{K} |\bar{\mathbf{h}}_k^\mathsf{H} \mathbf{t}_i|^2 +\sigma_k^2 \right) -\alpha_k.
		\end{split}
	\end{equation}
	By further introducing the following matrix variables $\bm \Sigma_1 =\operatorname{diag}\left(\sqrt{(1+\alpha_1)}\beta_1, \cdots, \sqrt{(1+\alpha_K)}\beta_K\right), \bm \Sigma_2 =\operatorname{diag}\left(|\beta_1|^2, \cdots, |\beta_K|^2\right), \bar{\mathbf{H}}=\left[\bar{\mathbf{h}}_1, \cdots, \bar{\mathbf{h}}_K\right], \bm \alpha=\{\alpha_1,\cdots,\alpha_K\}, \bm \beta=\{\beta_1,\cdots,\beta_K\}$,
	the objective in problem \eqref{eq:pro-reformula-di} is reformulated as a more compact form as follows
	\begin{equation}\label{eq:block-con}
			G(\bm \alpha, \bm \beta,  \mathbf{T})\!=2\Re\!\left\{\operatorname{tr}\left( \bm \Sigma_1\bar{\mathbf{H}}^\mathsf{H}\mathbf{T}\right)\!\right\}\!-\!\operatorname{tr}\left(\mathbf{T}\mathbf{T}^\mathsf{H}  \bar{\mathbf{H}}\bm \Sigma_2 \bar{\mathbf{H}}^\mathsf{H}\right).
	\end{equation}
Notice that the problem, with \eqref{eq:block-con} as the objective function and \eqref{eq:power-constraint-re} as the constraint, is block-wise convex. We therefore adopt an AO algorithm, decomposing it into block $ \{\bm \alpha, \bm \beta\}$ and block $ \{\mathbf{T}\} $. In the following, we derive the closed-form solutions for $ \{\bm \alpha, \bm \beta\}, \{\mathbf{T}\} $, respectively.
	\subsubsection{Subproblem for $\bm{\alpha}$ and $\bm{\beta}$}
	Since both subproblems with respect to $\bm{\alpha}$ and $\bm{\beta}$ are convex and unconstrained, their optimal solutions admit closed forms:
	\begin{equation}\label{eq:alpha}
		\alpha_k^\star=\frac{|\bar{\mathbf{h}}_k^\mathsf{H}\mathbf{t}_k|^2}{ \sum_{i\neq k}|\bar{\mathbf{h}}_k^\mathsf{H} \mathbf{t}_i|^2 + \sigma_k^2 },\;
		\beta_k^\star=\frac{\sqrt{1+\alpha_k}\bar{\mathbf{h}}_k^\mathsf{H} \mathbf{t}_k}{ \sum_{i=1}^{K} |\bar{\mathbf{h}}_k^\mathsf{H} \mathbf{t}_i|^2 + \sigma_k^2 }.
	\end{equation}

	
	\subsubsection{Subproblem for $ \mathbf{T} $}    
	With $ \bm \alpha $ and $ \bm \beta $ fixed, the subproblem with respect to $ \mathbf{T} $ is given by
	\begin{equation}\label{eq:pro-reformula-P}
			\max_{ \mathbf{T}} \quad  G\left( \mathbf{T}\right), \quad 
			\operatorname{s.t.}  \quad
			 \eqref{eq:power-constraint-re}.
	\end{equation}

	%

		Define the feasible set $ \mathcal{P} = \{\mathbf{T}\in \mathbb{C}^{K \times K}|\operatorname{tr}(\mathbf{T}\mathbf{T}^\mathsf{H})=P_t\} $. By introducing a constant factor $\xi P_t$ into the objective, problem \eqref{eq:pro-reformula-P} can be equivalently reformulated as
	\begin{equation}\label{eq:power-shift}\small
		\begin{split}
			\max_{\mathbf{T} \in \mathcal{P}} \; 2\Re\left\{\operatorname{tr}\left( \bm \Sigma_1\bar{\mathbf{H}}^\mathsf{H}\mathbf{T}\right)\!\right\}+\operatorname{tr}\!\left(\mathbf{T}\mathbf{T}^\mathsf{H}\left(\xi\mathbf{I}_K-\bar{\mathbf{H}}\bm \Sigma_2 \bar{\mathbf{H}}^\mathsf{H}\right)\right).
		\end{split}
	\end{equation}

	By further introducing an auxiliary matrix $\overline{\mathbf{T}} \in \mathbb{R}^{K\times K}$, we can transform \eqref{eq:power-shift} into
	\begin{equation}\label{eq:power-linear}\small
		\begin{split}
			\max_{\mathbf{T},\overline{\mathbf{T}}} &\quad 2\Re\left\{\operatorname{tr}\left( \bm \Sigma_1\bar{\mathbf{H}}^\mathsf{H}\mathbf{T}\right)\right\}-\operatorname{tr}\left(\overline{\mathbf{T}}\overline{\mathbf{T}}^\mathsf{H}\left(\xi\mathbf{I}_K-\bar{\mathbf{H}}\bm \Sigma_2 \bar{\mathbf{H}}^\mathsf{H}\right)\right)\\ &+2\Re \left\{\operatorname{tr}\left(\overline{\mathbf{T}}\mathbf{T}^\mathsf{H}\left(\xi\mathbf{I}_K-\bar{\mathbf{H}}\bm \Sigma_2 \bar{\mathbf{H}}^\mathsf{H}\right)\right)\right\},
		\end{split}
	\end{equation}
	where the optimal $ \overline{\mathbf{T}}^\star = \mathbf{T}$.

Let $ \bm\Pi_{\mathcal P}(\cdot):\mathbb{C}^{K\times K} \longrightarrow \mathcal P $ denote the projection operator onto $\mathcal{P}$. With a given $ \overline{\mathbf{T}} $, the optimal $ {\mathbf{T}}^\star $ of problem \eqref{eq:pro-reformula-P} is
	\begin{equation}\label{eq:update-F}
		{\mathbf{T}}^\star = \bm \Pi_{\mathcal{P}}	\left(\bar{\mathbf{H}}\bm \Sigma_1+\left(\lambda_1\mathbf{I}_K-  \bar{\mathbf{H}}\bm \Sigma_2 \bar{\mathbf{H}}^\mathsf{H}\right)\overline{\mathbf{T}}\right),
	\end{equation}
	where  $	\bm \Pi_{\mathcal{P}}(\mathbf{T}) = \arg\min_{\mathbf{Z} \in \mathcal{P}}\|\mathbf{T}-\mathbf{Z}\|^2_\mathsf{F}=\sqrt{\frac{P_t}{\operatorname{tr}(\mathbf{T}\mathbf{T}^\mathsf{H})}}\mathbf{T}$.

	\subsection{Computational Complexity Analysis}
	The computational complexity of the proposed PSLA algorithm is dominated by the update of $\mathbf{T}$ in \eqref{eq:update-F}, which requires $\mathcal{O}(K^3)$ operations. In addition, the SVD in Theorem \ref{the:MiLAC-HBF} requires $\mathcal{O}(L K^2)$ operations. Therefore, the overall computational complexity is $\mathcal{O}(I_1 I_2 K^3 + L K^2)$, where $I_1$ and $I_2$ denote the numbers of iterative update for $\{\bm{\alpha}, \bm{\beta}\}$ and $\mathbf{T}$, respectively.

	\section{Simulation Results}
	In this section, we provide numerical results to validate the theoretical analysis and to demonstrate the effectiveness of the proposed architecture and algorithm.
	
		\begin{figure}[t!]
		\centering
		
		\subfloat[\label{fig:convergence}Convergence.]
		{\includegraphics[width=0.31\textwidth]{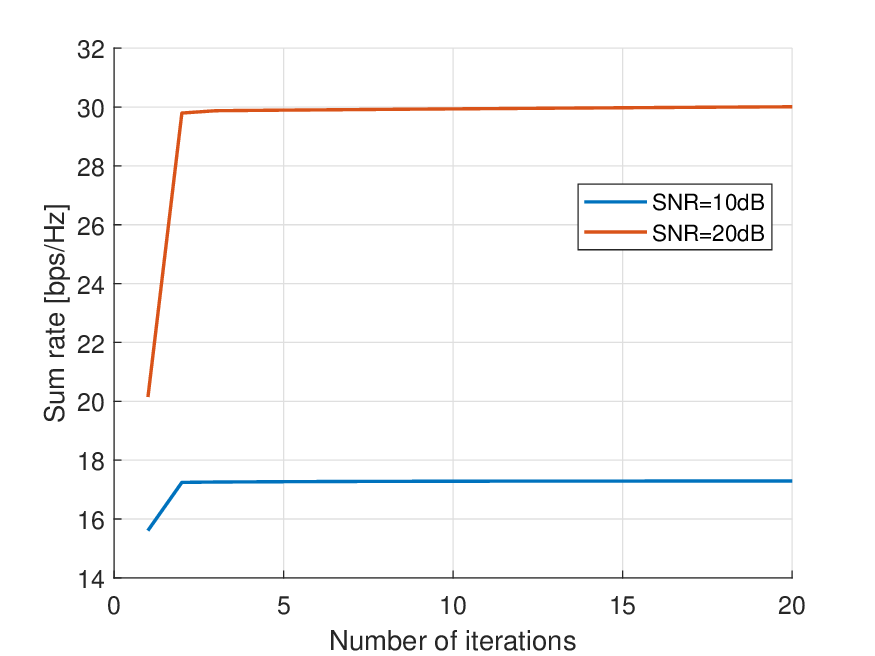}}
		\hfill
		\subfloat[\label{fig:differ}Algorithm comparison.]
		{\includegraphics[width=0.31\textwidth]{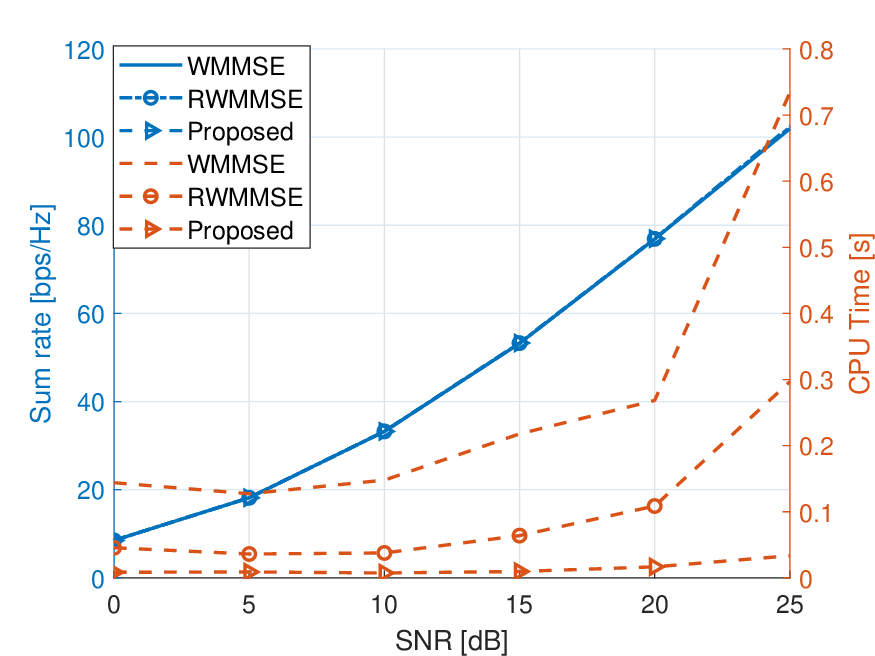}}
		\caption{Performance of the proposed PSLA-based algorithm.}
	\end{figure}
	
	\begin{figure}[t!]
		\centering
		\subfloat[\label{fig:WSR-Pt} $ K=4 $.]
		{\includegraphics[width=0.35\textwidth]{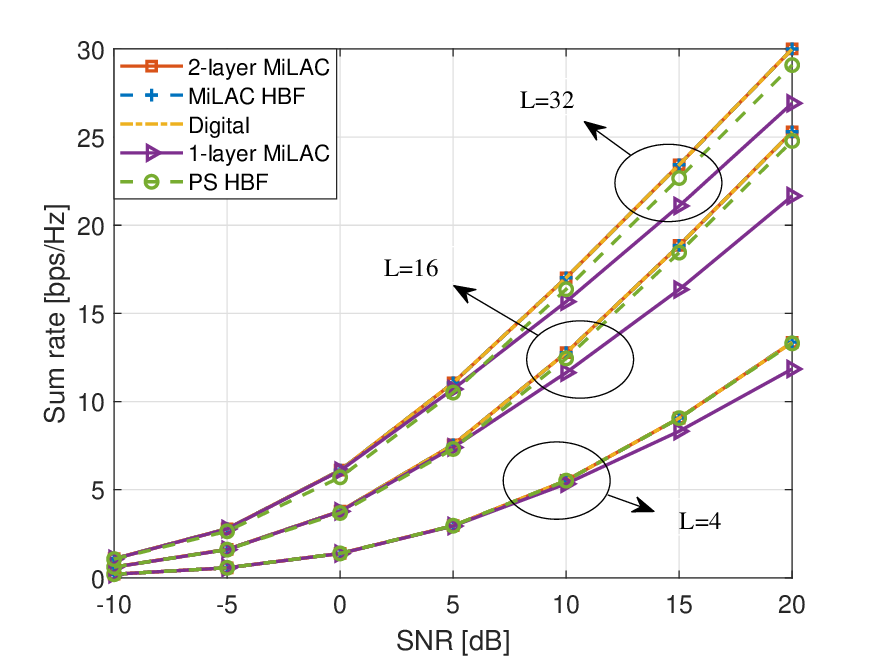}}
		\hfill
		\subfloat[\label{fig:WSR-L} $ K=8 $.]
		{\includegraphics[width=0.35\textwidth]{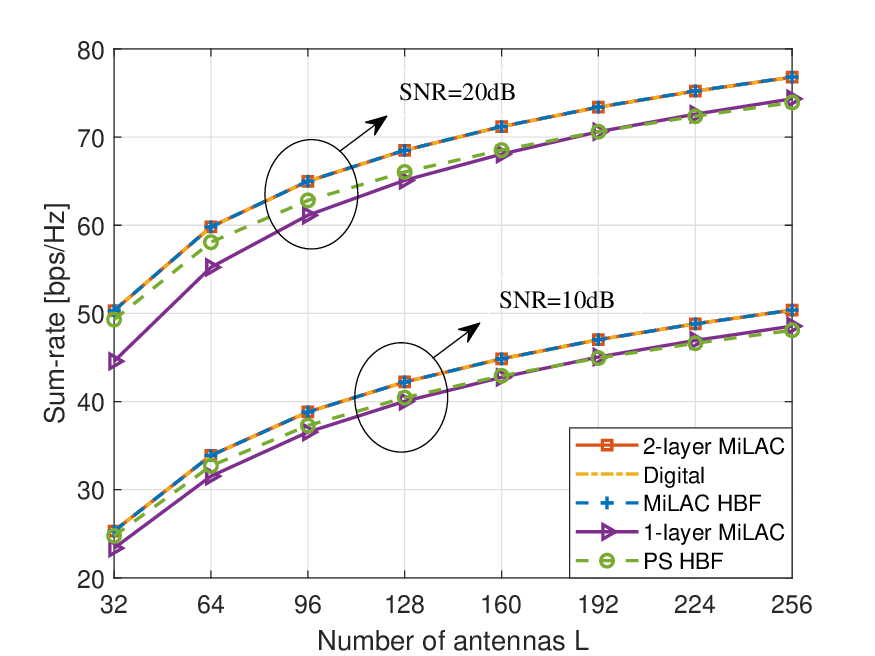}}
		
		\caption{Comparison of different beamforming architectures.}
	\end{figure}	
	
	The channel of user-$k$ is generated i.i.d. as $\mathbf{h}_k \sim (\mathbf{0}, \mathbf{I}_L)$, and the noise variance is fixed at $\sigma_k=1, \forall k \in \mathcal{K}$. The convergence tolerance is $\epsilon=10^{-4}$. All simulation results are averaged over 100 random channel realizations. The following beamforming architectures are considered for comparison:
	(1) Digital: a fully digital architecture with $L$ RF chains and an $L \times K$ beamformer;
	(2) 1-layer MiLAC \cite{nerini2025milac-capacity}: a $(L+K)$-port MiLAC-aided analog architecture with $K$ RF chains and an $L \times K$ beamformer;
	(3) MiLAC HBF \cite{limits-milac}: a hybrid architecture with $K$ RF chains, combining a $K \times K$ digital beamformer and an $L \times K$ MiLAC-based analog beamformer;
	(4) PS HBF \cite{yu2016alternating}: a fully connected hybrid architecture with $K$ RF chains and phase shifter (PS)-based analog beamforming;
	(5) 2-layer MiLAC: the proposed fully analog architecture.

	Fig. \ref{fig:convergence} illustrates the convergence behavior of the proposed algorithm under different transmit signal-to-noise ratio (SNR) values, with $L=32$ and $K=4$. It can be observed that the proposed algorithm converges within a few iterations. Fig. \ref{fig:differ} demonstrates that the proposed algorithm achieves comparable performance to the baseline methods (i.e., WMMSE and RWMMSE) while significantly reducing the average CPU time. For a fair comparison, the PSLA-based algorithm is applied to all architectures in this section.
	
	Fig.~\ref{fig:WSR-Pt} illustrates the sum-rate performance of the proposed \textit{2-layer MiLAC} as a function of the transmit SNR under different antenna configurations and user settings, in comparison with four conventional architectures: \textit{digital}, \textit{1-layer MiLAC}, \textit{MiLAC HBF}, and \textit{PS HBF}.  It can be observed that \textit{digital}, \textit{MiLAC HBF}, and \textit{2-layer MiLAC} achieve identical performance, and all exhibit increasing sum-rate performance as the SNR increases. The equivalent performance of the proposed \textit{2-layer MiLAC} and fully digital beamforming architectures  perfectly validates the theoretical analysis specified in Theorem \ref{the:MiLAC-HBF}.
	
	Fig.~\ref{fig:WSR-L} illustrates the sum-rate performance as the number of transmit antennas increases for $K=8$. It can be observed that the performance of \textit{1-layer MiLAC} gradually approaches that of \textit{digital} as the number of antennas increases, whereas the performance of \textit{PS HBF} is worse than that of the \textit{digital} approach, especially as the number of antennas increases. In contrast, the proposed \textit{2-layer MiLAC} achieves the same performance as the fully digital scheme while requiring only $K$ low-resolution RF chains, whereas the digital architecture relies on $L$ RF high-resolution RF chains. This demonstrates the generality and practical effectiveness of the proposed architecture.
	\vspace{-0.4cm}
	\section{Conclusion}
	This work proposed a novel beamforming architecture, referred to as the two-layer MiLAC-aided beamforming architecture. Building upon this design, we theoretically demonstrated that the proposed fully analog architecture can achieve performance equivalent to fully digital beamforming. Furthermore, we formulated a sum-rate maximization problem and developed an efficient algorithm for its solution. Simulation results validate the theoretical analysis and show the effectiveness of the proposed two-layer MiLAC-aided beamforming architecture, highlighting its potential as a competitive solution for future large-scale MIMO systems.
	

	
	\bibliographystyle{IEEEtran}  
	\bibliography{reference}

\begin{thebibliography}{10}
\providecommand{\url}[1]{#1}
\csname url@samestyle\endcsname
\providecommand{\newblock}{\relax}
\providecommand{\bibinfo}[2]{#2}
\providecommand{\BIBentrySTDinterwordspacing}{\spaceskip=0pt\relax}
\providecommand{\BIBentryALTinterwordstretchfactor}{4}
\providecommand{\BIBentryALTinterwordspacing}{\spaceskip=\fontdimen2\font plus
\BIBentryALTinterwordstretchfactor\fontdimen3\font minus
  \fontdimen4\font\relax}
\providecommand{\BIBforeignlanguage}[2]{{%
\expandafter\ifx\csname l@#1\endcsname\relax
\typeout{** WARNING: IEEEtran.bst: No hyphenation pattern has been}%
\typeout{** loaded for the language `#1'. Using the pattern for}%
\typeout{** the default language instead.}%
\else
\language=\csname l@#1\endcsname
\fi
#2}}
\providecommand{\BIBdecl}{\relax}
\BIBdecl

\bibitem{BDRIStutorial}
H.~Li, M.~Nerini, S.~Shen, and B.~Clerckx, ``A tutorial on beyond-diagonal
  reconfigurable intelligent surfaces: Modeling, architectures, system design
  and optimization, and applications,'' \emph{{IEEE} Commun. Surveys Tuts.},
  2025.

\bibitem{nerini2025milac1}
M.~Nerini and B.~Clerckx, ``Analog computing for signal processing and
  communications — {P}art {I}: Computing with microwave networks,''
  \emph{IEEE Trans. Signal Process.}, pp. 1--15, Dec. 2025.

\bibitem{nerini2025milac2}
------, ``Analog computing for signal processing and communications — {P}art
  {II}: Toward gigantic {MIMO} beamforming,'' \emph{IEEE Trans. Signal
  Process.}, pp. 1--15, Dec. 2025.

\bibitem{nerini2025milac-capacity}
------, ``Capacity of {MIMO} systems aided by microwave linear analog computers
  ({MiLACs}),'' \emph{arXiv preprint arXiv:2506.05983}, 2025.

\bibitem{nerini2025mimo}
------, ``{MIMO} systems aided by microwave linear analog computers:
  Capacity-achieving architectures with reduced circuit complexity,''
  \emph{{IEEE} Trans. Wireless Commun.}, Mar. 2026.

\bibitem{multiuser-milac}
T.~Fang, X.~Zhou, and Y.~Mao, ``On the performance of lossless reciprocal
  {MiLAC} architectures in multi-user networks,'' \emph{{IEEE} Wireless Commun.
  Lett.}, pp. 1--1, Apr. 2026.

\bibitem{limits-milac}
Z.~Wu, M.~Nerini, and B.~Clerckx, ``Microwave linear analog computer
  ({MiLAC})-aided multiuser {MISO}: Fundamental limits and beamforming
  design,'' \emph{arXiv preprint arXiv:2601.10060}, 2026.

\bibitem{pozar2021microwave}
D.~M. Pozar, \emph{Microwave engineering: theory and techniques}.\hskip 1em
  plus 0.5em minus 0.4em\relax John wiley \& sons, 2021.

\bibitem{WMMSE2011}
Q.~Shi, M.~Razaviyayn, Z.-Q. Luo, and C.~He, ``An iteratively weighted {MMSE}
  approach to distributed sum-utility maximization for a {MIMO} interfering
  broadcast channel,'' \emph{IEEE Trans. Signal Process.}, vol.~59, no.~9, pp.
  4331--4340, Sep. 2011.

\bibitem{rethingkingWMMSE}
X.~Zhao, S.~Lu, Q.~Shi, and Z.-Q. Luo, ``Rethinking {WMMSE}: Can its complexity
  scale linearly with the number of bs antennas?'' \emph{IEEE Trans. Signal
  Process.}, vol.~71, pp. 433--446, Feb. 2023.

\bibitem{universal_low2025}
X.~Zhao and Q.~Shi, ``A universal low-dimensional subspace structure in
  beamforming design: Theory and applications,'' \emph{IEEE Trans. Signal
  Process.}, vol.~73, pp. 1775--1791, Apr. 2025.

\bibitem{fang2023multi-group}
T.~Fang and Y.~Mao, ``Optimal beamforming structure and efficient optimization
  algorithms for generalized multi-group multicast beamforming optimization,''
  \emph{{IEEE} Trans. Signal Process.}, pp. 1--16, Jun. 2025.

\bibitem{202501joint}
X.~Zhou, T.~Fang, and Y.~Mao, ``Joint active and passive beamforming
  optimization for beyond diagonal {RIS}-aided multi-user communications,''
  \emph{IEEE Commun. Lett.}, pp. 1--1, Jan. 2025.

\bibitem{shen2018fractional}
K.~Shen and W.~Yu, ``Fractional programming for communication systems—part
  {I}: Power control and beamforming,'' \emph{IEEE Trans. Signal Process.},
  vol.~66, no.~10, pp. 2616--2630, May. 2018.

\bibitem{yu2016alternating}
X.~Yu, J.-C. Shen, J.~Zhang, and K.~B. Letaief, ``Alternating minimization
  algorithms for hybrid precoding in millimeter wave {MIMO} systems,''
  \emph{IEEE J. Sel. Topics Signal Process.}, vol.~10, no.~3, pp. 485--500,
  2016.

\end{thebibliography}

\end{document}